\newtheorem{definition}{Definition}
\newtheorem{remark}{Remark}
\newtheorem{corollary}{Corollary}
\newtheorem{proposition}{Proposition}
\newcommand{\mrm}{\mathrm}
\def\begquo{\begin{quote}}
	\def\endquo{\end{quote}}
\def\begequarr{\begin{eqnarray}}
	\def\endequarr{\end{eqnarray}}
\def\begequarrs{\begin{eqnarray*}}
	\def\endequarrs{\end{eqnarray*}}
\def\begarr{\begin{array}}
	\def\endarr{\end{array}}
\def\begequ{\begin{equation}}
	\def\endequ{\end{equation}}
\def\lab{\label}
\def\begdes{\begin{description}}
	\def\enddes{\end{description}}
\def\begenu{\begin{enumerate}}
	\def\begite{\begin{itemize}}
		\def\endite{\end{itemize}}
	\def\endenu{\end{enumerate}}
\def\lef[{\left[\begin{array}}
	\def\rig]{\end{array}\right]}
\def\begcen{\begin{center}}
	\def\endcen{\end{center}}
\def\begrem{\begin{remark}\rm}
	\def\endrem{\end{remark}}
\def\begdef{\begin{definition}}
	\def\enddef{\end{definition}}
\def\begpro{\begin{propositionosition}}
	\def\endpro{\end{propositionosition}}
\def\begfac{\begin{fact}}
	\def\endfac{\end{fact}}
\def\begass{\begin{assumptionption}}
	\def\endass{\end{assumptionption}}
\def\begmat#1{\begin{bmatrix}#1\end{bmatrix}}
\def\begali#1{\begin{align}{#1}\end{align}}
\def\L2e{{\cal L}_{2e}}
\def\adj{\mbox{adj}}
\def\max{{\mbox{max}}}
\title{PMU-based dynamic state and parameter estimation for dynamic
	security assessment in power systems - Ultimate boundedness in the presence of measurement noise}
\author{Nicolai Lorenz-Meyer, René Suchantke, and Johannes Schiffer
\thanks{N. Lorenz-Meyer is with Brandenburg University of Technology Cottbus-Senftenberg, 03046 Cottbus, Germany (e-mail: lorenz-meyer@b-tu.de).}
\thanks{R. Suchantke is with 50Hertz Transmission GmbH, 10557 Berlin, Germany (e-mail: rene.suchantke@50hertz.com).}
\thanks{J. Schiffer is with Brandenburg University of Technology Cottbus-Senftenberg, 03046 Cottbus, Germany and Fraunhofer Research Institution for Energy Infrastructures and Geothermal Systems (IEG), 03046 Cottbus, Germany (e-mail: schiffer@b-tu.de).}
}
\begin{document}
	
	\maketitle
	
\begin{abstract}
Dynamic state and parameter estimation methods for dynamic security assessment in power systems are becoming increasingly important for system operators.
		Usually, the data used for this type of applications stems from phasor measurement units (PMUs) and is corrupted by noise. In general, the impact of the latter may significantly deteriorate the estimation performance. 
			This motivates the present work, in which 
			it is proven that the state and parameter estimation method proposed by part of the authors in \cite{lorenz-meyer_pmu-based_2020-1} and extended in \cite{lorenz-meyer_dynamic_2022} features the property that the estimation errors are ultimately bounded
			in the presence of PMU measurement data corrupted by bounded noise. The analysis is conducted for the third-order flux-decay model of a synchronous generator and holds independently of the employed automatic voltage regulator and power system stabilizer (if present). The analysis is illustrated by simulations.
		\end{abstract}

	\section{Introduction}
\subsection{Motivation}	
Due to the massive introduction of renewable, power-electronics interfaced components, as well as the implementation of novel demand-response technologies, power systems undergo major changes~\cite{zhao_power_2019, winter_pushing_2015}. Also, an increased loading of grid assets yields system operation closer to the stability limit~\cite{ulbig_impact_2014}. Consequently, the reliable monitoring of the system states as well as the accurate estimation of dynamic system parameters is crucial to ensure a stable and reliable system operation~\cite{zhao_et_al_power_2021}.

Additionally, analyzing various stability phenomena is an important part of a transmission system operator's (TSOs) responsibilities. During the last years, stability analyses increased in importance and complexity, with new stability phenomena arising, specifically converter-driven stability and resonance stability~\cite{hatziargyriou_et_al_definition_2021}. To be able to perform accurate dynamic studies of a power system, the network elements need to be represented precisely in the simulation tool. However, many network elements, which play a crucial role in stability analyses, are not owned by a TSO, such as generation units (e.g. power plants and wind farms) and loads, among others. Therefore, the availability of dynamic data for the TSO is oftentimes very limited. Documentation of dynamic behavior may not be accessible, unusable due to bad or patchy documentation or is not available at all.
In this case, dynamic state and parameter estimations can help to recover crucial dynamic data of an asset, thereby supporting and improving system model validity. Moreover, TSOs world-wide expand their phasor measurement unit (PMU) infra\-structure. PMUs perform a GPS-synchronized measurement of current and voltage phasors and usually record those values at rates up to~50 or~60 times a second (once per cycle) \cite{zhou_dynamic_2015}. 
The main motivation from a TSOs perspective is to utilize the abundant measurements provided by PMUs in combination with the estimation algorithm studied in the present paper, for generation units, whose dynamic data is unknown. 
This is further emphasized as accurate knowledge of dynamic behavior of large conventional power plants will become even more important, when the number of large units decreases in the power grid of the future.

\subsection{Contributions}
Building on the mixed algebraic and dynamic state observation method for multi-machine power systems presented in \cite{lorenz-meyer_pmu-based_2020-1} and extended in \cite{lorenz-meyer_dynamic_2022}, we provide an analytical analysis of the impact of measurement noise on the method's performance in terms of the measurement error. 
In this setting, our main contributions are three-fold:
\begin{enumerate}
	\item We prove that the estimation errors of the algebraic estimator for the rotor angle and the quadrature-axis internal voltage remain ultimately bounded in the presence of PMU measurement data corrupted by bounded noise. Furthermore, we show that - even though the estimation algorithm is nonlinear - the algebraic estimate can be expressed as a nominal part with an affine disturbance originating from the influence of the measurement noise.
	\item By utilizing the bounds on the algebraic estimate obtained in 1., we prove that the trajectories of the estimation error of the adaptive observer for the relative shaft speed as well as the damping and inertia constants remain ultimately bounded. For this, we derive a linear time-varying (LTV) system description of the estimation error dynamics. We then utilize the structure of the LTV system together with the exponential stability of the noiseless adaptive observer shown in \cite{lorenz-meyer_pmu-based_2020-1} to prove the ultimate boundedness of the trajectories. 
	We illustrate the impact of PMU measurements with Gaussian and non-Gaussian noise on the method's performance in simulations using the New England IEEE-39 bus system \cite{hiskens_ieee_nodate}.
\end{enumerate} 
The remainder of the paper is structured as follows. The mathematical model of the multi-machine power system is introduced in Section 2. The influence of measurement noise on the dynamic state and parameter estimation method is  derived in Section 3 and illustrated in Section 4 using simulation results. Concluding remarks are given in Section 5.

\section{Model of the electric power system}
In this note, an electric power network with $N~>~1$ synchronous generators (SGs) is considered. Each of the SGs is described by the well-known third-order flux-decay model \cite[Eq. (7.176-7.178)]{sauer_power_2006}, see also \cite[Eq. (11.108)]{machowski_power_2008}, \cite[Eq. (3.3), (3.5), (3.15)]{van_cutsem_voltage_1998}, which is given by
\begin{subequations}
	\lab{model}
	\begali{
		\label{eq:x1}	\dot{x}_{1}&= x_{2}  ,\\
		\label{eq:x2}	\dot{x}_{2}&=\frac{\omega_\mrm{s}}{2H}(T_\mrm{m}-T_\mrm{e}-Dx_{2})  ,\\
		\dot{x}_{3} &= \frac{1}{T_\mrm{d0}'}\left(-x_{3}-(x_\mrm{d}-x_\mrm{d}')I_\mrm{t}\sin(x_1- \phi_\mrm{t})+E_\mrm{f}\right).}
\end{subequations}
with $T_{\mrm{m}}$ denoting the mechanical power, $T_{\mrm{e}}$ denoting the electrical air-gap torque, $E_{\mrm{f}}$ denoting the field voltage, $\phi_\mrm{t}$ denoting the terminal current angle, $I_{\mrm{t}}$ and $V_{\mrm{t}}$ denoting the terminal current and voltage magnitude, respectively. Furthermore, the known (positive) model parameters are the direct-axis reactance $x_{\mrm{d}}$, the direct-axis transient reactance $x_{\mrm{d}}'$, while the damping factor $D$, the inertia constant $H$ and  direct-axis transient open-circuit time constant $T_{\mrm{d0}}'$ are assumed unknown.
In \eqref{model}, the unknown state vector is defined as
\begin{align}
	x& := \begin{bmatrix} x_{1} & x_{2} & x_{3} \end{bmatrix}^\top = \begin{bmatrix} \delta & \omega-\omega_\mrm{s} & E_\mrm{q}' \end{bmatrix}^\top,
\end{align}
with $\delta$ denoting the rotor angle, $\omega$ denoting the shaft speed, $\omega_\mrm{s}$ denoting the nominal synchronous speed and $E_{\mrm{q}}'$ denoting the quadrature-axis internal voltage.
By defining the unknown constants
\begin{equation}
	\begin{split}
		a_{1}&=\frac{\omega_s D}{2H}, \ \ a_{2}= \frac{\omega_s}{2H},
	\end{split} \label{Eq:Paramaters_a1a2}
\end{equation}	
the model \eqref{model} of an individual SG can be compactly written as 
\begin{subequations}
	\lab{x}
	\begali{
		\lab{x1}
		\dot x_{1}&=x_{2},\\
		\lab{x2}	
		\dot x_{2}&= -a_{1} x_{2}+ a_{2}(T_\mrm{m} - T_\mrm{e}) ,\\
		\label{x3}
		\dot x_{3} &= \frac{1}{T_\mrm{d0}'}\left(-x_{3}-(x_\mrm{d}-x_\mrm{d}')\sin(x_1- \phi_\mrm{t})+E_\mrm{f}\right).
	}
\end{subequations}
\begin{figure}
	\centering
	\includegraphics[width=1\linewidth]{./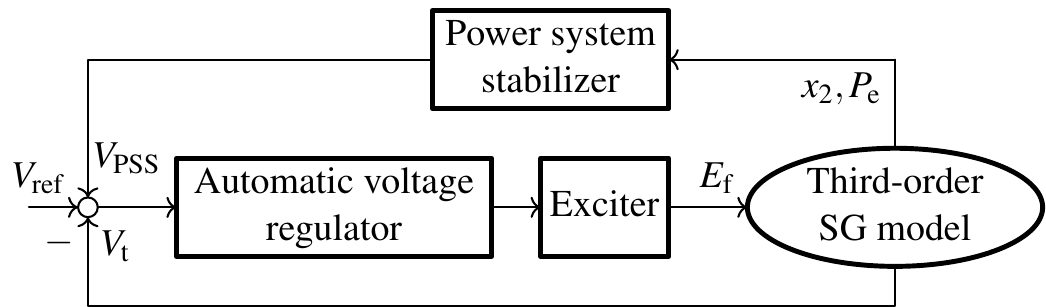}
	\caption{Overall SG model including the excitation system as well as the PSS. 
		\label{fig:SG_model}
	}
\end{figure}
Moreover, the stator algebraic equation is expressed in accordance with  \cite[Eq. (5.134)]{sauer_power_2006} as
\begin{equation}
	\begin{split}
		\label{eq:stator_alg}
		\mrm{j} x_{3}\mrm{e}^{\mrm{j}(x_{1}-\frac{\pi}{2})} &= (R_\mrm{s} + \mrm{j} x_\mrm{d}') I_\mrm{t} \mrm{e}^{\mrm{j}\phi_\mrm{t}} + V_\mrm{t} \mrm{e}^{\mrm{j}\theta_\mrm{t}} -\\& \ \ \ \  - (x_\mrm{q} - x_\mrm{d}') \cos(x_1- \phi_\mrm{t})\mrm{e}^{\mrm{j}(x_{1}-\frac{\pi}{2})}, 
	\end{split}
\end{equation}
where $x_\mrm{q}$ denotes the quadrature-axis reactance, $R_\mrm{s}$ the stator resistance and $\theta_\mrm{t}$ the terminal voltage angle. 
\begin{remark}
	The SG is usually equipped with an excitation system to control the terminal voltage $V_\mrm{t}$. This typically consists of an automatic voltage regulator (AVR) and an exciter to provide the required field voltage $E_\mrm{f}$. Additionally, a power system stabilizer (PSS) can be added to the excitation system to help dampen power swings in the system \cite[Chapter 2.3]{machowski_power_2008}. In this work, knowledge of the time-varying field voltage is not required to estimate the states nor the unknown parameters of the third-order flux decay model. Consequently and for the sake of clarity, the potentially high dimensional models of the excitation system and the PSS are not included in \eqref{model}. Though, in Figure \ref{fig:SG_model}, we illustrate the cascaded structure of the overall SG model including the excitation system and the PSS. 
\end{remark}	 
\section{Dynamic state and parameter estimation in the presence of  measurement noise}
In the following, we analyze the effect of measurement noise on the dynamic state and parameter estimation method presented in \cite{lorenz-meyer_pmu-based_2020-1} and extended in \cite{lorenz-meyer_dynamic_2022}. For this, we assume that the PMU measurements available at the generator terminal are affected by bounded noise, i.e.,
\begin{equation}
	\begin{split}
		\lab{y}
		y_{1}&=V_{\mrm{t}} + w_1, \quad
		y_{2}=\theta_\mrm{t} + w_2,\\
		y_{3}&=I_{\mrm{t}} + w_3,  \quad
		y_{4}=\phi_\mrm{t} + w_4 \\
		y_{5}&=P_{\mrm{t}} + w_5,  \quad
		y_{6}=Q_\mrm{t} + w_6.
	\end{split}
\end{equation}
Here $|w_i|\leq w_i^\mrm{max}, i = \{1,2, \dots ,6\}$, denotes the bounded measurement noises and $w_i^\mrm{max}\geq 0$ their upper bounds. $P_{\mrm{t}}$ and $Q_{\mrm{t}}$ denote the terminal active and reactive power, respectively. 
Furthermore, we assume that the value of the mechanical torque $T_{\mrm{m}}$ is available up to an affine error, i.e.,
\begin{equation}
	\label{eq:Tm_est}
	\hat{T}_\mrm{m}=T_{\mrm{m}} + w_\mrm{Tm}, \quad |w_\mrm{Tm}| \leq w_\mrm{Tm}^\mrm{max}.
\end{equation}	
\begin{remark}
	Depending on the type of power plant for which the method is used, the mechanical torque can be modeled, e.g. by a power control governor and turbine model (see \cite{lorenz-meyer_dynamic_2022}). Consequently, the mechanical torque can be calculated online in dependence on available PMU measurements \eqref{y}. 
\end{remark}
For the readers' convenience, we briefly recall the key results of the previously presented dynamic state and parameter estimation method whenever necessary. We denote these nominal equations for the noiseless case with superscript $\{\cdot\}^\mrm{nom}$. Furthermore, in Figure \ref{fig:Obs_Overview} we summarize the overall structure of the previously proposed method. 
\begin{figure}
	\centering
	\includegraphics[width=1\linewidth]{./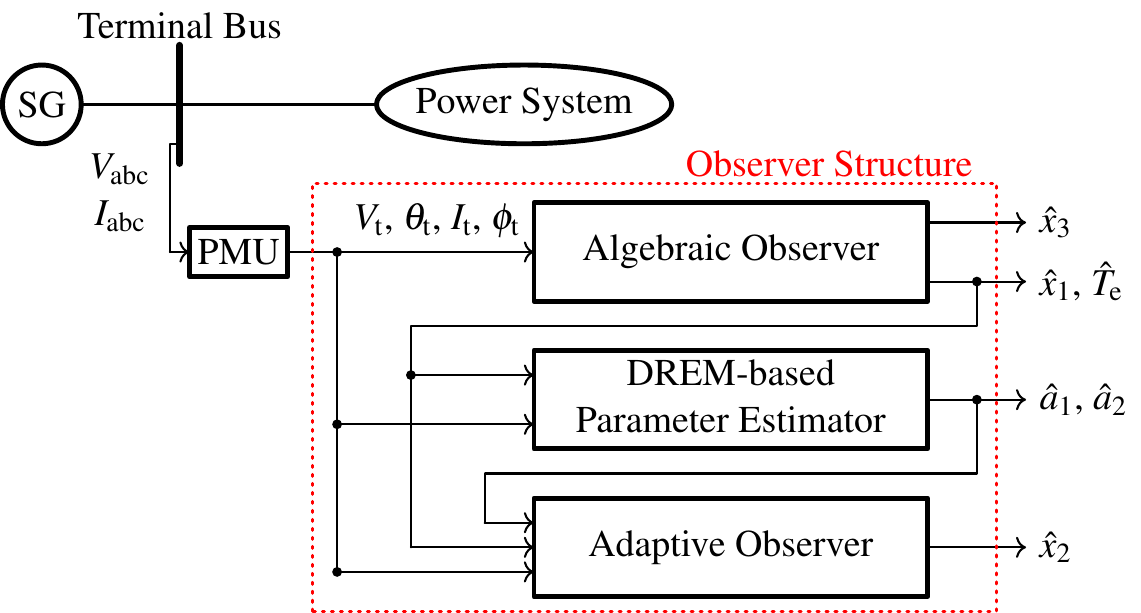}
	\caption{Structure of the previously presented dynamic state and parameter estimation method. 
		\label{fig:Obs_Overview}
	}
\end{figure}
\subsection{Influence of measurement noise on the algebraic observer for $x_1$ and $x_3$}
We study the effect of bounded measurement noise on the algebraic observer presented in \cite{lorenz-meyer_dynamic_2022}.
For this and as shown in \cite{lorenz-meyer_dynamic_2022} with the purpose of algebraically estimating the states $x_1$ and $x_3$ of the system \eqref{x}, the stator algebraic equation \eqref{eq:stator_alg} can be rewritten as 
\begin{equation}
	\label{eq:staor_alg__mani_noiseless}
	\begin{split}
		& \left((x_\mrm{q} - x_\mrm{d}') \cos\left(\frac{\pi}{2}-x_{1}+\phi_\mrm{t}\right)I_\mrm{t} +  x_{3}\right) \mrm{e}^{\mrm{j}x_{1}}  = \\& \ \ \ \ \ \ = (R_\mrm{s} + \mrm{j} x_\mrm{q}) I_\mrm{t} \mrm{e}^{\mrm{j}\phi_\mrm{t}} + V_\mrm{t} \mrm{e}^{\mrm{j}\theta_\mrm{t}} \eqqcolon \vec{\psi}^\mrm{nom} .
	\end{split}	
\end{equation}
With noiseless PMU measurements the nominal right-hand side of \eqref{eq:staor_alg__mani_noiseless}, i.e., $\vec{\psi}^\mrm{nom}$, is known and thus the states $x_1$ and $x_3$ of system \eqref{x} are uniquely determined by (see \cite[ Lemma 1]{lorenz-meyer_dynamic_2022})
\begin{equation}
	\begin{split}
		\label{eq:arxiv_alg_obs}
		x_{1} &= \arg\{\vec{\psi}^\mrm{nom}\}   ,\\
		x_{3} &= |\vec{\psi}^\mrm{nom}| - (x_\mrm{q}-x'_\mrm{d})\cos\left(\frac{\pi}{2}-x_{1}+\phi_\mrm{t}\right)I_\mrm{t}  .
	\end{split}	
\end{equation}
We replace the true values of $V_\mrm{t}, \theta_\mrm{t}, I_\mrm{t}$ and $\phi_\mrm{t}$ by the noisy measurements \eqref{y} and thus rewrite \eqref{eq:staor_alg__mani_noiseless} as 
\begin{equation}
	\begin{split}
		\label{eq:stator_alg_manip}
		& \left((x_\mrm{q} - x_\mrm{d}') \cos\left(\frac{\pi}{2}-\hat{x}_{1}+y_4\right)y_3  +  \hat{x}_{3}\right) \mrm{e}^{\mrm{j}\hat{x}_{1}} = \\& \ \ \ \ \ \ =  (R_\mrm{s} + \mrm{j} x_\mrm{q}) y_3\mrm{e}^{\mrm{j}y_4} + y_1 \mrm{e}^{\mrm{j}y_2} \eqqcolon \vec{\psi}  .
	\end{split}	
\end{equation}
Hence, we obtain an estimate of the states $x_{1}$ and $x_{3}$ as
\begin{equation}
	\begin{split}
		\label{eq:alg_obs_noise}
		\hat{x}_{1} &= \arg\{\vec{\psi}\}   ,\\
		\hat{x}_{3} &= |\vec{\psi}| - (x_\mrm{q}-x'_\mrm{d})\cos\left(\frac{\pi}{2}-\hat{x}_{1}+y_4\right)y_3.
	\end{split}
\end{equation}
We have the following result, characterizing the estimation errors
\begin{equation}
	\label{eq:err:alg_obs}
	\begin{split}
		\tilde{x}_1 = \hat{x}_1 - x_1 , \quad  \tilde{x}_3 = \hat{x}_3 - x_3 ,
	\end{split}
\end{equation} in the presence of bounded noise.
\begin{proposition}
	\lab{pro:bound_alg_obs}
	Consider the states $x_{1}$ and $x_{3}$ of the system \eqref{x}, and the algebraic estimation formula given in \eqref{eq:alg_obs_noise}. Under the assumption of noisy measurements as in \eqref{y}, there exist positive constants $w^{max}_{x1}$ and $w^{max}_{x3}$ such that
	$$|\tilde{x}_1|\leq  w^{max}_{x1},\quad |\tilde{x}_3|\leq  w^{max}_{x3}.$$
	That is, the estimation errors are ultimately bounded.
\end{proposition}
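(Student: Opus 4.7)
The plan is to reduce the proof to a perturbation analysis of the complex-valued quantity $\vec{\psi}$, exploiting the fact that the estimator \eqref{eq:alg_obs_noise} is just the nominal formula \eqref{eq:arxiv_alg_obs} evaluated at the noise-perturbed $\vec{\psi}$. First, I would substitute the noisy measurements \eqref{y} into the definition of $\vec{\psi}$ in \eqref{eq:stator_alg_manip} and rewrite it as a nominal part plus an additive complex perturbation,
\begin{equation*}
\vec{\psi} = \vec{\psi}^\mrm{nom} + \vec{\Delta\psi}(w_1,\dots,w_4),
\end{equation*}
by expanding $y_1 \mrm{e}^{\mrm{j}y_2} = (V_\mrm{t}+w_1)\mrm{e}^{\mrm{j}\theta_\mrm{t}}\mrm{e}^{\mrm{j}w_2}$ and similarly for the current term. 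Using $|\mrm{e}^{\mrm{j}w_2}-1|\leq |w_2|$ together with the triangle inequality, one obtains a bound of the form $|\vec{\Delta\psi}|\leq c_\psi$, where $c_\psi$ depends linearly on $w_1^\mrm{max},\dots,w_4^\mrm{max}$ and on the physical upper bounds on $V_\mrm{t}$ and $I_\mrm{t}$ (which hold during normal operation).

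Next, for $\tilde x_1$, I would invoke the local Lipschitz property of the $\arg(\cdot)$ function on sets bounded away from the origin: provided $|\vec{\psi}^\mrm{nom}|\geq \psi_\mrm{min}>0$ and $c_\psi < \psi_\mrm{min}$, the elementary geometric estimate
\begin{equation*}
|\arg(z+\Delta z)-\arg(z)| \leq \arcsin\!\left(\frac{|\Delta z|}{|z|}\right)
\end{equation*}
yields $|\tilde x_1| \leq \arcsin(c_\psi/\psi_\mrm{min}) =: w^\mrm{max}_{x1}$. For $\tilde x_3$, I would subtract the two expressions in \eqref{eq:alg_obs_noise} and \eqref{eq:arxiv_alg_obs} and estimate each piece: $\bigl||\vec{\psi}|-|\vec{\psi}^\mrm{nom}|\bigr|\leq c_\psi$ by the reverse triangle inequality, while the difference of the cosine terms can be bounded via the 1-Lipschitz property of $\cos$ applied to the already-bounded error $|\tilde x_1|+|w_4|$ and the noisy factor $y_3=I_\mrm{t}+w_3$. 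Combining gives a constant $w^\mrm{max}_{x3}$ that is linear in $c_\psi$, $w_3^\mrm{max}$, $w_4^\mrm{max}$ and $w^\mrm{max}_{x1}$.

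The main obstacle is establishing the lower bound $|\vec{\psi}^\mrm{nom}|\geq \psi_\mrm{min}>0$ that is required for the $\arg$ function to be locally Lipschitz and for the perturbation to be genuinely small. From \eqref{eq:staor_alg__mani_noiseless} one sees $|\vec{\psi}^\mrm{nom}|=|(x_\mrm{q}-x_\mrm{d}')\cos(\tfrac{\pi}{2}-x_1+\phi_\mrm{t})I_\mrm{t}+x_3|$, so this reduces to a non-degeneracy assumption on the machine's operating point (the quadrature-axis internal voltage plus the saliency term stays away from zero), which is standard under stable operation. Once this is in hand, the rest of the argument is routine triangle-inequality bookkeeping, and the two constants $w^\mrm{max}_{x1}, w^\mrm{max}_{x3}$ can in principle be written out explicitly as affine functions of the noise bounds $w_i^\mrm{max}$, which will be useful for the subsequent LTV analysis of the dynamic observer.
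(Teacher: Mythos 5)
Your proposal is correct and follows the same overall strategy as the paper's proof---decompose $\vec{\psi}$ into $\vec{\psi}^\mrm{nom}$ plus an additive complex perturbation and propagate bounds through the argument and the modulus---but it executes the critical step by a genuinely different and tighter argument. The paper linearizes $\mrm{e}^{\mrm{j}w_i}$ via small-angle approximations with $\mathcal{O}(w_i^2)$ remainders and then asserts, with reference to the phasor diagrams, that the disturbance terms ``cause an affine error in the argument and the absolute value of $\vec{\psi}$,'' justifying the existence of $w_{x1}^\mrm{max}$ and $w_{\psi}^\mrm{max}$ only by the boundedness of the noise and of $V_\mrm{t}$, $I_\mrm{t}$. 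For the argument that justification is incomplete: $\arg(\cdot)$ is not globally Lipschitz, and a bounded additive perturbation of a phasor passing near the origin can shift its argument by up to $\pi$. Your estimate $|\arg(z+\Delta z)-\arg(z)|\leq\arcsin(|\Delta z|/|z|)$, combined with the explicit non-degeneracy hypothesis $|\vec{\psi}^\mrm{nom}|\geq\psi_\mrm{min}>0$ and $c_\psi<\psi_\mrm{min}$, supplies precisely the condition the paper leaves implicit, and you correctly observe that it reduces to a standard operating-point assumption on the machine. Likewise, replacing the truncated Taylor expansions by the exact inequality $|\mrm{e}^{\mrm{j}w}-1|\leq|w|$ eliminates the uncontrolled $\mathcal{O}(\cdot)$ bookkeeping, and your treatment of $\tilde{x}_3$ (reverse triangle inequality plus the Lipschitz property of $\cos$) matches the paper's computation in substance. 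What the paper's route buys in exchange is the explicit affine representations $\hat{x}_1=x_1+w_{x1}$ and $\hat{x}_3=x_3+w_{x3}$ with identified first-order coefficients, which are reused verbatim in the proof of Proposition 2 (in the expansion of $\hat{T}_\mrm{e}$ and of the regressor); your constants serve the same purpose, so you should record them in that affine form to keep the downstream LTV analysis intact.
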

As Proposition \ref{pro:bound_alg_obs} shows, bounded measurement noise results in a bounded estimation error for the algebraic observer \eqref{eq:alg_obs_noise}. This important robustness property cannot be guaranteed in general for nonlinear observers \cite{shim_nonlinear_2016}, which underpins the relevance of asserting it in Proposition \ref{pro:bound_alg_obs}.

\begin{proof}
	To analyze the properties of the estimation errors $\tilde{x}_1$ and $\tilde{x}_3$, we utilize \eqref{y} and expand $\vec{\psi}$ defined in \eqref{eq:stator_alg_manip} as
	\begin{align*}
		\vec{\psi}=  (R_\mrm{s} + \mrm{j} x_\mrm{q}) (I_\mrm{t} + w_3)\mrm{e}^{\mrm{j}(\phi_\mrm{t}+w_4)} + (V_\mrm{t}+w_1) \mrm{e}^{\mrm{j}(\theta_\mrm{t}+w_2)}.
	\end{align*}
	
	Let $a$ be either $\theta_\mrm{t}$ or $\phi_\mrm{t}$. Then, the terms $\mrm{e}^{\mrm{j}(\theta_\mrm{t}+w_2)}$ and $\mrm{e}^{\mrm{j}(\phi_\mrm{t}+w_4)}$ can be expanded as follows:
	\begin{equation*}
		\begin{split}
			\mrm{e}^{\mrm{j}(a+w_i) }&=\mrm{e}^{\mrm{j}a}\mrm{e}^{\mrm{j}w_i} \\
			&= \cos(a)  + \mrm{j}\sin(a)  +\\ &\ \ \ \ +  w_i(- \sin(a)  + \mrm{j}\cos(a) ) + \mathcal{O}\left(w_i^2\right)
			\\& = \mrm{e}^{\mrm{j}a}+\mrm{j} w_i\mrm{e}^{\mrm{j}a}+ \mathcal{O}\left(w_i^2\right).
		\end{split}	
	\end{equation*}
	Here we have used the small angle approximation for the noise in the phase angles $w_i$ as 
	$$ \cos (w_i) = 1 +\mathcal{O}(w_i^2)\ \ \ , \ \ \ \sin (w_i) = w_i +\mathcal{O}(w_i^3).$$ 
	Hence, we can write $\vec{\psi}$ as
	\begin{equation}
		\label{eq:psi}
		\begin{split}
			\vec{\psi} & = (R_\mrm{s} + \mrm{j} x_\mrm{q}) (I_\mrm{t} + w_3)\left(\mrm{e}^{\mrm{j}\phi_\mrm{t}} + \mrm{j} w_4\mrm{e}^{\mrm{j}\phi_\mrm{t}} \right) + \\& \ \ \ \ + (V_\mrm{t}+w_1) \left(\mrm{e}^{\mrm{j}\theta_\mrm{t}} + \mrm{j} w_2\mrm{e}^{\mrm{j}\theta_\mrm{t}} \right) + \\ & \ \ \ \ + \mathcal{O}\left((I_\mrm{t}+w_3)w_2^2+(V_\mrm{t}+w_1)w_4^2\right)\\ 
			& = \vec{\psi}^\mrm{nom} + \left(w_3 + \mrm{j} (I_\mrm{t}w_4+ w_3w_4) \right) (R_\mrm{s} + \mrm{j} x_\mrm{q}) \mrm{e}^{\mrm{j}\phi_\mrm{t}}+ \\& \ \ \ \ + ((w_1 + \mrm{j}(V_\mrm{t} w_2 +w_1w_2) \mrm{e}^{\mrm{j}\theta_\mrm{t}} +\\ & \ \ \ \ + \mathcal{O}\left((I_\mrm{t}+w_3)w_2^2+(V_\mrm{t}+w_1)w_4^2\right).
		\end{split}	
	\end{equation}
	\begin{figure}
		\centering
		\includegraphics[width=1\linewidth]{./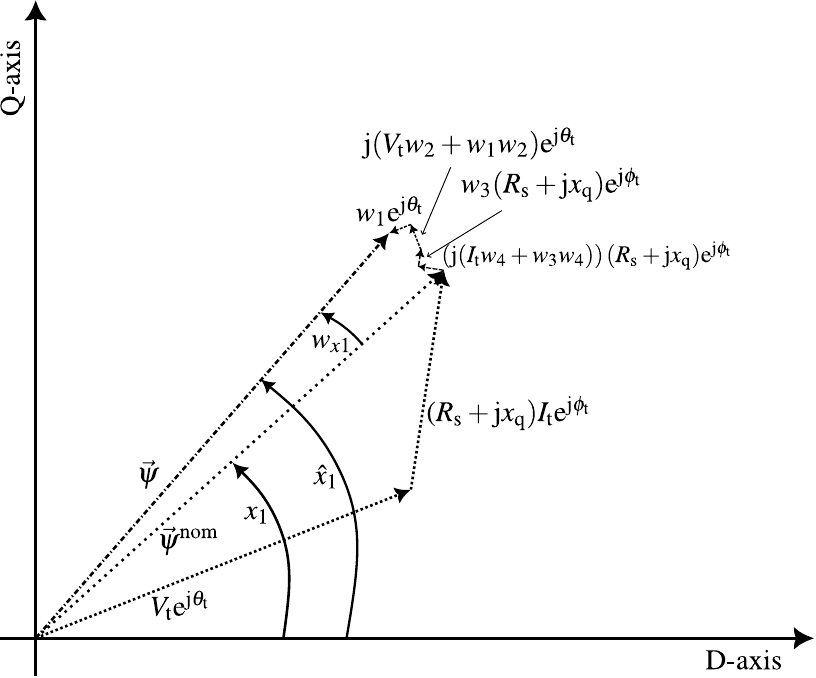}
		\caption{Phasor diagram of $\vec{\psi}$ and $\vec{\psi}^\mrm{nom}$ representing the size of the error in the estimate of $x_1$ caused by the noisy measurements \eqref{y}. In this representation the error of order $\mathcal{O}\left((I_\mrm{t}+w_3)w_2^2+(V_\mrm{t}+w_1)w_4^2\right)$ is not shown. 
			\label{fig:arg_psi}
		}
	\end{figure}
	The measurement noise causes a parallel and a tangential disturbance in the direction of the nominal phasors, i.e., in the direction of $\mrm{e}^{\mrm{j}\theta_\mrm{t}}$ and $(R_\mrm{s} + \mrm{j} x_\mrm{q})\mrm{e}^{\mrm{j}\phi_\mrm{t}}$. As can be seen in Figures \ref{fig:arg_psi} and \ref{fig:abs_psi}, the four disturbance terms in \eqref{eq:psi} add up and cause an affine error in the argument and the absolute value of $\vec{\psi}$. Thus, these can be expressed as
	\begin{subequations}
		\begin{align}
			\arg\{\psi\} &= \arg\{\vec{\psi}^\mrm{nom}\} + w_{x1}, \\
			|\psi| &= |\vec{\psi}^\mrm{nom}| + w_{\psi}, 	\label{eq:err_abs_psi}
		\end{align}
	\end{subequations}
	with the upper bounds $|w_{x1}|\leq w_{x1}^\mrm{max}$ and $|w_{\psi}|\leq w_{\psi}^\mrm{max}$. The existence of these bounds is guaranteed as the measurement noise as well as the terminal voltage and current are bounded by $|w_i|\leq w_i^\mrm{max}, i = \{1,2,\dots,6\}$, $|V_\mrm{t}|\leq V_{\mrm{t}}^\mrm{max}$ and $|I_\mrm{t}|\leq I_{\mrm{t}}^\mrm{max}$. Furthermore, the error term $\mathcal{O}\left((I_\mrm{t}+w_3)w_2^2+(V_\mrm{t}+w_1)w_4^2\right)$ is bounded following the same argument.  
	
	Hence, the estimate of the state $x_1$ is bounded as
	\begin{equation*} \label{eq:hatx1}
		\hat{x}_{1} = x_1 + w_{x1}.
	\end{equation*}
	Moreover, using the definition of the noisy measurements \eqref{y} as well as the derived expression for $|\psi|$ in \eqref{eq:err_abs_psi}, the estimate of the state $x_3$ given in \eqref{eq:alg_obs_noise} can be expressed as 
	\begin{equation*}
		\begin{split}
			\hat{x}_{3} &= - (x_\mrm{q}-x'_\mrm{d})\cos\left(\frac{\pi}{2}-{x}_{1} - w_{x1}+\phi_\mrm{t}+ w_4\right)(I_\mrm{t}+w_3) + \\ & \ \ \ \  +|\vec{\psi}^\mrm{nom}| + w_{\psi} .
		\end{split}
	\end{equation*}
	Some algebraic manipulations utilizing the trigonometric identity for the sum of angles 
	as well as the small angle approximation 
	\begin{equation*}
		\begin{split}
			\cos (w_4-w_{x1}) &= 1 +\mathcal{O}\left((w_4-w_{x1})^2\right) , \\
			\sin (w_4-w_{x1}) &= w_4-w_{x1} +\mathcal{O}\left((w_4-w_{x1})^3\right) ,
		\end{split}
	\end{equation*}
	yields
	\begin{equation*}
		\begin{split}
			\hat{x}_{3} &= x_3 + w_{\psi} + (x_\mrm{q}-x'_\mrm{d})\bigg( -\cos\left(\frac{\pi}{2}-{x}_{1} +\phi_\mrm{t}\right)w_3 + \\ & \ \ \ \ + \sin\left(\frac{\pi}{2}-{x}_{1} +\phi_\mrm{t}\right)(I_\mrm{t} + w_3)(w_{x1}-w_4) \bigg) + \\ & \ \ \ \ +\mathcal{O}\left((I_\mrm{t}+w_3)(w_4-w_{x1})^2\right) \\
			&=  x_3 + w_{x3}.
		\end{split}
	\end{equation*}
	The proof is completed by noting that $w_{x3}$ can be bounded using the triangular inequality as follows
	\begin{equation*}
		\begin{split}
			|w_{x3}| &\leq x_\mrm{q}-x'_\mrm{d}|\left( |w_3| + (|I_\mrm{t}| + |w_3|)(|w_{x1}|-|w_4|) \right) +\\ & \ \ \ \  +|w_{\psi}| + |\mathcal{O}\left((I_\mrm{t}+w_3)(w_4-w_{x1})^2\right) 
			\\ & \leq | x_\mrm{q}-x'_\mrm{d}|\bigg( w_3^\mrm{max} + (I_\mrm{t}^\mrm{max} w_3^\mrm{max})(w_{x1}^\mrm{max}-w_4^\mrm{max}) \bigg)+ \\& \ \ \ \ +w_{\psi}^\mrm{max} +\mathcal{O}\left((I_\mrm{t}+w_3)(w_4-w_{x1})^2\right)
			\\&\leq w_{x3}^\mrm{max}.
		\end{split}
	\end{equation*}
	\begin{figure}
		\centering
		\includegraphics[width=\linewidth]{./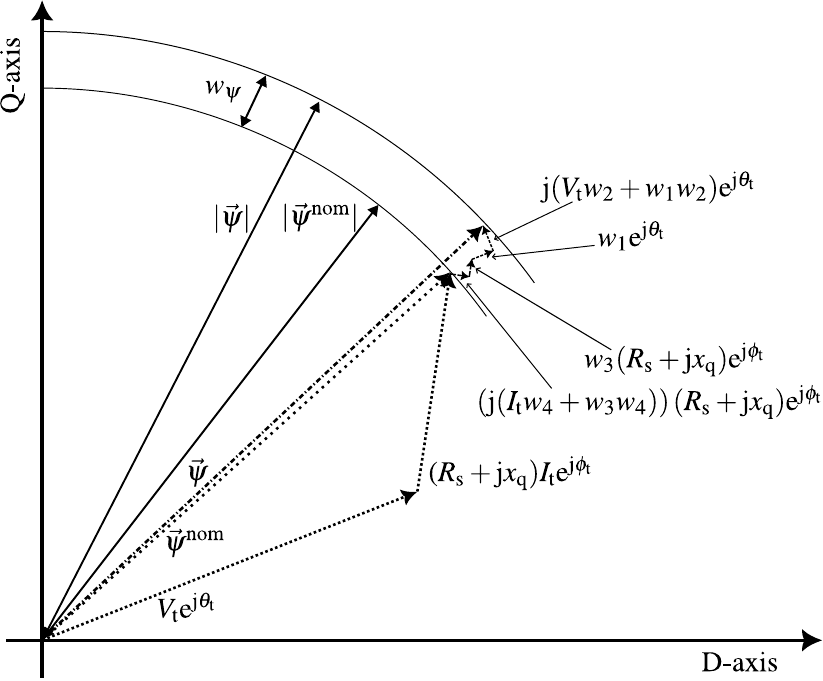}
		\caption{Phasor diagram of $\vec{\psi}$ and $\vec{\psi}^\mrm{nom}$ representing the size of the error in the estimate of the absolute value of $\vec{\psi}$ caused by the noisy measurements \eqref{y}. In this representation the error of order $\mathcal{O}\left((I_\mrm{t}+w_3)w_2^2+(V_\mrm{t}+w_1)w_4^2\right)$ is not shown.
			\label{fig:abs_psi}		}
	\end{figure}
\end{proof}
\subsection{Influence of measurement noise on the adaptive state and parameter estimator}
In the following, we study the influence of measurement noise on the adaptive state and parameter estimator presented in \cite{lorenz-meyer_pmu-based_2020-1, lorenz-meyer_dynamic_2022}. For this, we recall the adaptive observer for the state $x_2$ as well as for the unknown parameters $ \bm{\theta}= \begin{bmatrix}a_1 &a_2\end{bmatrix} ^\top $  :
\begin{equation*}
	\begin{split} 
		\dot{x}^{\mrm{I}}_{2} &= - ({\hat \theta_{1}}+k) ({x}^{I}_{2} + k x_{1}) +\hat \theta_{2} (T_\mrm{m}-T_\mrm{e}) ,\\
		\hat{x}_{2} &= {x}^\mrm{I}_{2} + k {x}_{1},\\
		\dot {\hat \theta}_{i}&=- \gamma_{i} \Delta^\mrm{nom}(\Delta^\mrm{nom}  \hat \theta_{i} -\mathcal{Z}_{i}^\mrm{nom}),\;i= \{1,2\}, 
	\end{split} 
\end{equation*}
where $\gamma_i>0$ and $k>0$ denote constant tuning parameters. $\Delta^\mrm{nom}$ and $\mathcal{Z}_{i}^\mrm{nom}$ are the nominal part of the regressor and the output after applying the 
dynamic regressor and extension (DREM) principle, respectively (see \cite{lorenz-meyer_pmu-based_2020-1, lorenz-meyer_dynamic_2022}).

To analyze the behavior of this adaptive observer in the case of PMU measurements affected by bounded noise, we replace the nominal values of $x_1, \Delta^\mrm{nom}, \mathcal{Z}_{i}^\mrm{nom}, T_\mrm{m} $ and  $T_\mrm{e}$ by their estimates 
\begin{subequations}
	\label{eq:ad_obs}
	\begali{
		\notag	\dot{x}^{\mrm{I}}_{2} &= - ({\hat \theta_{1}}+k) ({x}^{I}_{2} + k \hat{x}_{1}) +\hat \theta_{2} (\hat{T}_\mrm{m}-\hat{T}_\mrm{e}) ,\\
		\label{eq:obsx2}	\hat{x}_{2} &= {x}^\mrm{I}_{2} + k \hat{x}_{1},\\
		\label{eq:para_est}		\dot {\hat \theta}_{i}&=- \gamma_{i} \Delta(\Delta  \hat \theta_{i} -\mathcal{Z}_{i}),\;i=1,2.}
\end{subequations}
We have the following result, characterizing the estimation errors $\tilde{x}_2 = \hat{x}_2 - x_2$ and $\tilde{\theta}_{i} = \hat{\theta}_i - \theta_i, i = \{1,2\}$ in the presence of bounded noise.
\begin{proposition}
	\lab{pro:bound_adap_obs}
	Consider the state $x_{2}$ and the parameters ${\theta}_i, i = \{1,2\}$ of the system \eqref{x} as well as the adaptive observer given in \eqref{eq:ad_obs}. Under noisy measurements as in \eqref{y}, considering the estimate of the mechanical torque given in \eqref{eq:Tm_est} and assuming sufficiently small noise bounds, if the regressor $\Delta$ is of persistence of excitation, then there exist positive constants $w^{\mrm{max}}_{x2}$ and $w^{\mrm{max}}_{\theta},$ such that 
	\begin{subequations}
		\begin{align}
			|\tilde{x}_2| &\leq w_{x2}^\mrm{max} , \\
			|\tilde{\theta}_i| &\leq w_{\theta}^\mrm{max}, i = \{1,2\}.
		\end{align}
	\end{subequations}
	That is, the estimation errors are ultimately bounded.
\end{proposition}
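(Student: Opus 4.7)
The plan is to first derive the error dynamics for $\tilde{x}_2$ and $\tilde{\theta}_i$ in the noisy case and express them as a perturbed version of the nominal error dynamics analyzed in \cite{lorenz-meyer_pmu-based_2020-1}. Differentiating \eqref{eq:obsx2} and substituting \eqref{eq:ad_obs} together with \eqref{x1}--\eqref{x2}, and using the relations $\hat x_1=x_1+w_{x1}$ from Proposition \ref{pro:bound_alg_obs} and $\hat T_\mrm{m}=T_\mrm{m}+w_\mrm{Tm}$ from \eqref{eq:Tm_est}, I would collect all terms containing the noise signals $w_{x1},w_{x3},w_\mrm{Tm}$ and $w_1,\ldots,w_6$ on the right-hand side, yielding an error system of the form
\begin{equation*}
\begin{bmatrix}\dot{\tilde x}_2 \\ \dot{\tilde\theta}_1 \\ \dot{\tilde\theta}_2\end{bmatrix}
= A(t)\begin{bmatrix}\tilde x_2 \\ \tilde\theta_1 \\ \tilde\theta_2\end{bmatrix} + d(t),
\end{equation*}
where $A(t)$ coincides with the nominal LTV system matrix from \cite{lorenz-meyer_pmu-based_2020-1} (in which $-\gamma_i(\Delta^\mrm{nom})^2$ appears on the diagonal of the parameter block) and $d(t)$ collects the bounded perturbation terms. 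For the parameter block this step requires writing $\Delta=\Delta^\mrm{nom}+\tilde\Delta$ and $\mathcal Z_i=\mathcal Z_i^\mrm{nom}+\tilde{\mathcal Z}_i$, where $\tilde\Delta$ and $\tilde{\mathcal Z}_i$ depend affinely (up to higher-order terms) on the noise signals and on the algebraic errors $w_{x1},w_{x3}$, and then absorbing the products $-\gamma_i\Delta\tilde\Delta\hat\theta_i$ into $d(t)$.

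Next, I would verify that $d(t)$ is uniformly bounded: each of its components is a polynomial expression in bounded quantities ($\tilde x_1,\tilde x_3$ bounded by Proposition \ref{pro:bound_alg_obs}, $w_\mrm{Tm}$ bounded by \eqref{eq:Tm_est}, terminal signals bounded by physical limits, and $\hat\theta_i$ bounded provided the noise is sufficiently small, which is the role of the smallness assumption). Because the regressor $\Delta$ is of persistence of excitation, the nominal homogeneous system $\dot\xi=A(t)\xi$ is uniformly exponentially stable, as established in \cite{lorenz-meyer_pmu-based_2020-1}. Applying a standard perturbation result for exponentially stable LTV systems (Gr\"onwall-type estimate, or equivalently the total stability theorem for UES linear systems driven by an $\mathcal L_\infty$ input) then yields
\begin{equation*}
\left\|\begin{bmatrix}\tilde x_2(t)\\ \tilde\theta_1(t)\\ \tilde\theta_2(t)\end{bmatrix}\right\|
\leq c_1 e^{-\alpha t}\|\cdot\|_{t=0} + c_2\|d\|_\infty,
\end{equation*}
from which the claimed bounds $|\tilde x_2|\leq w_{x2}^\mrm{max}$ and $|\tilde\theta_i|\leq w_\theta^\mrm{max}$ follow after taking $t$ large enough.

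The main obstacle I expect is the second step, namely showing that the parameter-block perturbation $-\gamma_i\Delta\tilde\Delta\hat\theta_i$ does not destroy boundedness. The trajectory $\hat\theta_i$ itself depends on the noise, so one needs a mild a priori argument (e.g., a local boundedness window combined with the smallness of the noise) to ensure that the state-dependent part of $d(t)$ does not grow unboundedly before the exponential decay of $A(t)$ takes hold; this is precisely what the ``sufficiently small noise bounds'' hypothesis in the statement is designed to buy. A secondary technical point is to argue that persistence of excitation of the nominal regressor $\Delta^\mrm{nom}$ is inherited, for small enough noise, by the perturbed regressor $\Delta$ appearing in \eqref{eq:para_est}, so that the perturbation argument can indeed be anchored on a UES nominal system.
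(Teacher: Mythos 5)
Your outline follows essentially the same route as the paper: express the noisy error dynamics as a perturbation of the nominal LTV error system whose exponential stability under persistence of excitation was established in \cite{lorenz-meyer_pmu-based_2020-1}, then apply a standard perturbation bound. The one substantive difference is the decomposition, and it matters precisely at the point you flag as your ``main obstacle.'' You absorb the state-dependent noise products (e.g. $-\gamma_i\Delta\tilde\Delta\hat\theta_i$, which contains $\tilde\theta_i$) into the additive disturbance $d(t)$, which forces you to argue an a priori bound on $\hat\theta_i$ and creates the circularity you worry about. The paper instead splits the perturbation into a \emph{vanishing} part $\Lambda(t)\bm{x}$ with $\|\Lambda(t)\|_2\leq\rho$ and a \emph{non-vanishing} part $\bm{d}(t)$ with $\|\bm{d}(t)\|_2\leq\zeta$, so the total perturbation satisfies $\|\bm{g}(\bm{x},t)\|_2\leq\rho\|\bm{x}\|_2+\zeta$; invoking the converse Lyapunov theorem for the nominal system and then \cite[Lemma 9.4]{khalil_nonlinear_2002} handles exactly this form, with the ``sufficiently small noise bounds'' hypothesis entering only to make $\rho$ small enough relative to the exponential decay rate. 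No a priori boundedness window is needed, and your secondary concern about PE being inherited by the noisy regressor $\Delta$ also dissolves, because the stability argument is anchored entirely on the nominal regressor $\Delta^\mrm{nom}$ while all $\Delta-\Delta^\mrm{nom}$ terms are pushed into $\Lambda(t)$ and $\bm{d}(t)$. Finally, note that a nontrivial fraction of the paper's proof is the bookkeeping your sketch presupposes: deriving the affine error $w_{\mrm{Te}}$ in the reconstructed electrical torque, propagating the noise through the DREM filters by linearity, and bounding the determinant error $w_\Delta$ via a matrix perturbation theorem; these steps are needed to justify that $\rho$ and $\zeta$ are indeed finite and scale with the noise bounds.
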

\begin{proof}
	By utilizing \eqref{eq:alg_obs_noise} to estimate $x_1$ and $x_3$ and using the noisy measurements \eqref{y}, the electrical air-gap torque $T_\mrm{e}$ can be calculated via (see \cite{lorenz-meyer_dynamic_2022} for the noiseless case) 
	\begin{equation*}
		\begin{split}
			\hat{T}_\mrm{e} &= (x_\mrm{q}-x_\mrm{d}')\cos\left(\frac{\pi}{2}-\hat{x}_1+y_4\right)\sin\left(\frac{\pi}{2}-\hat{x}_1+y_4\right)y_3^2 \\  & \ \ \ \ +\hat{x}_3\sin\left(\frac{\pi}{2}-\hat{x}_1+y_4\right)y_3.
		\end{split}
	\end{equation*}
	By applying the trigonometric identities for the sum of angles and the double angle
		as well as the small angle approximation, $\hat{T}_\mrm{e}$ can be expressed as
		\begin{equation*}
			\begin{split}
				\hat{T}_\mrm{e} &= (x_\mrm{q}-x_\mrm{d}')\cos\left(\frac{\pi}{2}-{x}_1+\phi_\mrm{t}\right)\sin\left(\frac{\pi}{2}-{x}_1+\phi_\mrm{t}\right)I_\mrm{t}^2 + \\  & \ \ \ \ +{x}_3\sin\left(\frac{\pi}{2}-{x}_1+\phi_\mrm{t}\right)I_\mrm{t} + \\  & \ \ \ \ + \frac{1}{2}(x_\mrm{q}-x_\mrm{d}')\bigg(\sin\left(\pi-2{x}_1+2\phi_\mrm{t}\right) (2I_\mrm{t}w_3+w_3^2)  + \\  & \ \ \ \ +
				\cos\left(\pi-2{x}_1+2\phi_\mrm{t}\right) 2(w_4-w_{x1})(I_\mrm{t}+w_3)^2 \bigg) +\\
				& \ \ \ \ +  \sin\left(\frac{\pi}{2}-{x}_1+\phi_\mrm{t}\right)(x_3w_3+I_\mrm{t}w_{x3}+ w_{x3}w_3) -\\ & \ \ \ \ -\cos\left(\frac{\pi}{2}-{x}_1+\phi_\mrm{t}\right) (x_3+w_{x3})(I_\mrm{t}+w_3) (w_{x1}-w_4)+ \\
				& \ \ \ \ + \mathcal{O}\left((w_4-w_{x1})^2 \left((I_\mrm{t}+w_3)^2+(I_\mrm{t}+w_3)(x_3+w_{x3})\right)\right)\\
				&= T_\mrm{e} + w_\mrm{Te}  ,
			\end{split}
		\end{equation*}
		where $w_\mrm{Te}$ is obtained by employing the triangle inequality as follows:
		\begin{equation*}
			\begin{split}
				|w_\mrm{Te}| &\leq 
				\frac{1}{2}|x_\mrm{q}-x_\mrm{d}'|\bigg(2I_\mrm{t}^\mrm{max}w_3^\mrm{max}+(w_3^\mrm{max})^2  + \\  &  + 2(w_4^\mrm{max}-w_{x1}^\mrm{max})(I_\mrm{t}^\mrm{max}+w_3^\mrm{max})^2 \bigg) +\\
				&  +  x_3^\mrm{max}w_3^\mrm{max}+I_\mrm{t}^\mrm{max}w_{x3}^\mrm{max}+ w_{x3}^\mrm{max}w_3^\mrm{max} -\\ &   - (x_3^\mrm{max}+w_{x3}^\mrm{max})(I_\mrm{t}^\mrm{max}+w_3^\mrm{max}) (w_{x1}^\mrm{max}-w_4^\mrm{max})+ \\
				&    +  \mathcal{O}\left((w_4-w_{x1})^2 ((I_\mrm{t}+w_3)^2+(I_\mrm{t}+w_3)(x_3+w_{x3}))\right)\\&\leq w_\mrm{Te}^{\mrm{max}}.
			\end{split}
		\end{equation*}
		To analyze the estimation error of the state $x_2$, we define 
		\begin{equation*}
			\begin{split}
				\bar{x}_2 = x_2^I +kx_1- x_2 .
			\end{split}
		\end{equation*}
		By employing \eqref{eq:obsx2}, the dynamics of $\bar{x}_2$ follows as
		\begin{equation}
			\label{eq:x2_err}
			\begin{split}
				\dot{\bar{x}}_2 &= \dot{x}_2^I +k\dot{x}_1- \dot{x}_2 \\
				&=-({\hat \theta_{1}}+k) ({x}^{I}_{2} + k \hat{x}_1) +\hat \theta_{2} (\hat{T}_\mrm{m}-\hat{T}_\mrm{e}) +kx_2- \\ & \ \ \ \ -\dot{x}_2 \pm \hat{\theta}_{1} x_2\\
				&= -({\hat \theta_{1}}+k) ( x_2^I +kx_1 +kw_1- x_2)- \tilde{\theta}_{1} x_2 +\\& \ \ \ \ + \tilde{\theta}_{2}(T_\mrm{m} + w_\mrm{Tm} - T_\mrm{e}- w_\mrm{Te}) +{\theta}_{2}( w_\mrm{Tm} - w_\mrm{Te}) \\
				&= -({\hat \theta_{1}}+k) \bar{x}_2 + \theta_{2}(w_\mrm{Tm} - w_\mrm{Te}) -  ({ \theta_{1}}+k) kw_1 - \\
				& \ \ \ \   - [x_2+kw_1 \ \  T_{e}+w_\mrm{Te}-T_\mrm{m}-w_\mrm{Tm} ] \begin{bmatrix} \		 \tilde{\theta}_{1} \\  \tilde{\theta}_{2}\end{bmatrix}.
			\end{split}
		\end{equation}
		To derive properties of the DREM-based parameter estimator \eqref{eq:para_est} under noisy measurements, we follow the DREM procedure detailed in \cite[Chapter 4.2]{lorenz-meyer_pmu-based_2020-1}) for noiseless measurements. By applying the filter 
		\begin{equation*}
			\mathcal{F} = \frac{\lambda^2}{(\lambda+s)^2},
		\end{equation*}
		with tuning parameters $\lambda>0$ (and where $s$ denotes the Laplace operator) to \eqref{x2} and \eqref{x3}, we define
		\begin{equation}
			\label{eq:lre_def}
			\begin{split}
				z & \coloneqq \mathcal{F}[s^2[\hat{x}_{1}]] , \\
				\bm{\xi}&\coloneqq \begmat{-\mathcal{F}[s[\hat{x}_{1}]]\\ \mathcal{F}[\hat{T}_\mrm{m}-\hat{T}_\mrm{e}]} .
			\end{split}
		\end{equation}
		Due to the linearity of the filters, we can express \eqref{eq:lre_def} as 
		\begin{equation*}
			\begin{split}
				z & = \mathcal{F}[s^2[{x}_{1}]] + \mathcal{F}[s^2[w_{1}]] = z^\mrm{nom} + w_\mrm{z} , \\
				\bm{\xi} &=  \begmat{-\mathcal{F}[s[{x}_{1}]]\\ \mathcal{F}[T_\mrm{m}-{T}_\mrm{e}]}+ \begmat{-\mathcal{F}[s[w_{1}]]\\ \mathcal{F}[w_\mrm{Tm}-w_\mrm{Te}]} = \bm{\xi}^\mrm{nom}+ \bm{w}_{\xi},
			\end{split}
		\end{equation*}
		where the disturbances are bounded via
		\begin{equation*}
			\label{eq:lre_noisy}
			\begin{split}
				|w_\mrm{z}| & \leq w_1^\mrm{max} \|\mathcal{F}[s^2]\|_\infty \leq w_\mrm{z}^\mrm{max}  , \\
				|{w}_{\xi,1}|& \leq  w_1^\mrm{max}\|\mathcal{F}[s]\|_\infty \leq w_\mrm{\xi,1}^\mrm{max},\\
				|{w}_{\xi,2}|& \leq  (w_\mrm{Tm}^\mrm{max} -w_\mrm{Te}^\mrm{max})\|\mathcal{F}\|_\infty \leq w_\mrm{\xi,2}^\mrm{max},	\end{split}
		\end{equation*}
		with the $\mathcal H_\infty$-norm of the proper transfer functions denoted by $\| \cdot \|_\infty$. 
		Thus, we can express the nominal part via the linear regression equation (LRE)
		\begin{equation}
			\label{eq:lre}
			\begin{split}
				z^\mrm{nom}= \bm{\xi}^\mrm{nom} \bm{\theta}.
			\end{split}
		\end{equation}
		The LRE \eqref{eq:lre} is extended applying a linear, single-input 2-output, bounded-input bounded-output (BIBO)-stable operator $\bm{\mathcal{H}}$ yielding 
		\begin{equation} \label{eq:ext_lre} \bm{Z}^\mrm{nom} = 	\bm{\Xi}^\mrm{nom} \bm{\theta}.
		\end{equation}
		In the case of noisy measurements, the nominal parts are approximated by 
		\begin{equation}
			\begin{split}
				\label{eq:regressor_ext}
				\bm{Z} &\coloneqq \bm{\mathcal{H}}[z],\\
				\bm{\Xi} &\coloneqq \bm{\mathcal{H}}[(\bm{\xi})^\top].
			\end{split}
		\end{equation}
		Due to the linearity of the operator $\bm{\mathcal{H}}$, we can express \eqref{eq:regressor_ext} as 
		\begin{equation*}
			\begin{split}
				\bm{Z} & = \bm{\mathcal{H}}[z^\mrm{nom}] + \bm{\mathcal{H}}[w_\mrm{z}] = \bm{Z}^\mrm{nom}+\bm{w}_Z ,\\
				\bm{\Xi} & = \bm{\mathcal{H}}[(\bm{\xi}^\mrm{nom})^\top] + \bm{\mathcal{H}}[(\bm{w_\xi})^\top] \\ & \ \ \ \ = \bm{\Xi}^\mrm{nom} +\bm{w}_\Xi,
			\end{split}
		\end{equation*}
		where the disturbances are bounded by
		\begin{equation*}
			\begin{split}
				|{w}_{Z,i}| & \leq w_\mrm{z}^\mrm{max} \|\bm{\mathcal{H}}\|_\infty  , i = \{1,2\}, \\
				|{w}_{\Xi,1i}|& \leq  w_{\xi,1}^\mrm{max} \|\bm{\mathcal{H}}\|_\infty  , i = \{1,2\}, \\
				|{w}_{\Xi,2i}|& \leq  w_{\xi,2}^\mrm{max}\|\bm{\mathcal{H}}\|_\infty  , i = \{1,2\}, \\	\end{split}
		\end{equation*}
		with $\bm{w}_{\Xi,ji}$ denoting the $j,i$-th entry of the matrix $\bm{w}_{\Xi}$.
		
		In the last step of the DREM procedure, the extended LRE \eqref{eq:ext_lre}, with \eqref{eq:regressor_ext} utilized for the noisy case, is mixed defining
		\begin{equation*} 
			\begin{split}
				\mathcal{Z} &\coloneqq \adj\{\bm{\Xi}\}\bm{Z}, \\
				\Delta \bm{\theta}  &\coloneqq	\adj\{\bm{\Xi}\}\bm{\Xi}  =\det \{\bm{\Xi}\} ,
			\end{split}
		\end{equation*}
		where $\adj\{\cdot\}$ is the adjunct matrix and $\det\{\cdot\}$ is the determinant. In the noiseless case, this yields a decoupled LRE of the form
		\begin{equation} 
			\label{eq:ext_lre_mixed}
			\mathcal{Z}_i^\mrm{nom} =\Delta^\mrm{nom} {\theta}_i.
		\end{equation}
		As $\bm{\Xi} \in \mathbb{R}^{2 \times 2}$, we can decouple the influence of the nominal part and the disturbance in the $\adj\{\bm{\Xi^\mrm{nom}} +\bm{w_\Xi}\}$ and express $\mathcal{Z}$ as 
		\begin{equation*} 
			\begin{split}
				\mathcal{Z} &=(\adj\{\bm{\Xi^\mrm{nom}}\}+ \adj\{\bm{w_\Xi}\}) (\bm{Z}^\mrm{nom}+\bm{w}_Z ) \\ &= \adj\{\bm{\Xi^\mrm{nom}}\}\bm{Z}^\mrm{nom}+\adj\{\bm{\Xi^\mrm{nom}}\}\bm{w}_Z+ \\ & \ \ \ \ + \adj\{\bm{w_\Xi}\} (\bm{Z}^\mrm{nom}+\bm{w}_Z ) 
				\\ & = 	\mathcal{Z}^\mrm{nom} + \bm{w_\mathcal{Z}},
			\end{split} 
		\end{equation*}
		with 
		$$|\bm{w_\mathcal{Z}}| \leq \bm{w_\mathcal{Z}}^\mrm{max},$$
		where the existence of $\bm{w_\mathcal{Z}}^\mrm{max}$ is ensured by the facts that the disturbances $\bm{w}_Z$ and $\bm{w_\Xi}$ are bounded and $\bm{Z}^\mrm{nom}$ as well as $\bm{\Xi^\mrm{nom}}$ remain bounded as they are the outputs of stable filters with bounded inputs. 
		
		The error in the determinant defined as 
		$$ w_\Delta = \Delta- \Delta^\mrm{nom},$$
		can be bounded utilizing \cite[Theorem 2.12]{ipsen_perturbation_2008} via
		\begin{equation*} 
			\begin{split}
				|w_\Delta| \leq 2\|\bm{w}_\Xi\|_2 \ \max \{\|\bm{\Xi^\mrm{nom}}\|_2, \ \|\bm{\Xi}\|_2 \} \leq w_\Delta^\mrm{max}.
			\end{split} 
		\end{equation*}
		Thus, for noisy measurements \eqref{y} the dynamics of the estimation error $\tilde{\theta}_i ,\; i=\{1,2\}$ can be expressed as
		\begin{equation} 
			\label{eq:para_err}
			\begin{split}
				\dot{\tilde{\theta}}_i &= 	\dot{\hat{\theta}}_i- 	\dot{{\theta}}_i = 	\dot{\hat{\theta}}_i \\ &= - \gamma_{i} (\Delta^\mrm{nom}+ w_\Delta)\left((\Delta^\mrm{nom}+ w_\Delta)  \hat \theta_{i} -(\mathcal{Z}_i^\mrm{nom}+ w_{\mathcal{Z},i}) \right) \\
				&= \dot{\tilde{\theta}}_i^\mrm{nom} - \gamma_{i} \left( 2\Delta^\mrm{nom} w_\Delta + w_\Delta^2 \right) \tilde{\theta}_i  - \\ & \ \ \ \ - \gamma_{i} \left(  \left(\Delta^\mrm{nom} w_\Delta + w_\Delta^2\right)\theta_i  - w_{\mathcal{Z},i} \left(  \Delta^\mrm{nom} +w_\Delta  \right)\right). 
			\end{split} 
		\end{equation}
		Combining \eqref{eq:x2_err} and \eqref{eq:para_err}, we can express the error dynamics under noisy measurements in the form of a LTV system 
		\begin{equation}
			\label{eq:err_ad_ob_form}
			\dot{\bm{x}} = A(t)\bm{x}+ \Lambda(t)\bm{x} + \bm{d}(t)
		\end{equation}
		where the detailed description of the time-varying vector $\bm{d}(t)$ as well as the matrices $A(t)$ and $\Lambda(t)$ is provided in \eqref{eq:err_ad_ob_det} and the state vector is defined as $$\bm{x} =	\begin{bmatrix}{\bar{x}}_2 & {\tilde{\theta}}_1 & {\tilde{\theta}}_2\end{bmatrix}^\top  .$$
		\begin{figure*}[!t]
			\begin{equation}
\fontsize{4.5}{6}\selectfont
				\label{eq:err_ad_ob_det}
				\begin{aligned}
					\begin{bmatrix}\dot{\bar{x}}_2 \\ \dot{\tilde{\theta}}_1 \\ \dot{\tilde{\theta}}_2 \end{bmatrix} = 
					\begin{bmatrix}-(k+\hat{\theta}_1) & -x_2 & T_\mrm{m} - T_\mrm{e} \\ 0 & -\gamma_1 \Delta^\mrm{nom}& 0 \\ 0 & 0 & -\gamma_2 \Delta^\mrm{nom}	\end{bmatrix}\begin{bmatrix}{\bar{x}}_2 \\ {\tilde{\theta}}_1 \\ {\tilde{\theta}}_2	\end{bmatrix} 
					+  \begin{bmatrix}0 & -kw_1 & w_\mrm{Tm} - w_\mrm{Te} \\ 0 & -\gamma_1 \left( 2\Delta^\mrm{nom} w_\Delta + w_\Delta^2\right)& 0 \\ 0 & 0 & -\gamma_2 \left( 2\Delta^\mrm{nom} w_\Delta + w_\Delta^2\right)	\end{bmatrix}\begin{bmatrix}{\bar{x}}_2 \\ {\tilde{\theta}}_1 \\ {\tilde{\theta}}_2 	\end{bmatrix} 
					+ \begin{bmatrix}
						\theta_{2}(w_\mrm{Tm} - w_\mrm{Te}) -  ({ \theta_{1}}+k) kw_1 	\\
						- \gamma_{1} \left(  \left(\Delta^\mrm{nom} w_\Delta + w_\Delta^2\right)\theta_1  - w_{\mathcal{Z},1} \left(  \Delta^\mrm{nom} +w_\Delta  \right)\right) \\
						- \gamma_{2} \left(  \left(\Delta^\mrm{nom} w_\Delta + w_\Delta^2\right)\theta_2  - w_{\mathcal{Z},2} \left(  \Delta^\mrm{nom} +w_\Delta  \right)\right) 
					\end{bmatrix}
				\end{aligned}
			\end{equation}
			\hrule 
		\end{figure*}
		In \cite{lorenz-meyer_pmu-based_2020-1}, the nominal system 
		\begin{equation}
			\label{eq:nom_err_sys}
			\dot{\bm{x}} = A(t) \bm{x}
		\end{equation}
		is shown to be globally exponentially stable, provided the regressor $\Delta$ is of persistence of excitation (see \cite[Equation 2.5.3]{sastry_adaptive_1989} for a definition). Consequently, the existence of a Lyapunov function for the nominal system \eqref{eq:nom_err_sys} satisfying \cite[Eq. 9.3 - 9.5]{khalil_nonlinear_2002} is guaranteed by \cite[Theorem 4.14]{khalil_nonlinear_2002}.
		Recalling the assumption of sufficiently small noise bounds, the perturbation of \eqref{eq:err_ad_ob_form} can be bounded as 
		\begin{equation*}
			\|\bm{g}(\bm{x},t)\|_2 \leq \rho \|\bm{x}\|_2 + \zeta,
		\end{equation*}
		with 
		\begin{equation*}
			\begin{split}
				\|\Lambda(t)\|_2 \leq \rho,\\
				\|\bm{d}(t)\|_2 \leq {\zeta}.
			\end{split}
		\end{equation*}
		By invoking \cite[Lemma 9.4]{khalil_nonlinear_2002}, we bound the trajectories of \eqref{eq:err_ad_ob_form} as
		\begin{equation}
			\label{eq:bound_x}
			\begin{split}	\|\bm{x}\|_2 &\leq \sqrt{\frac{c_2}{c_1}}\rho \|\bm{x_0}\|_2 \mrm{e}^{-\alpha(t-t_0)}+ \\ & \ \ \ \ + \frac{c_4\rho\zeta}{2c_1\alpha} \left(1-\mrm{e}^{-\alpha(t-t_0)}\right)\leq w_\theta^\mrm{max},
			\end{split}
		\end{equation}
		where $\bm{x_0}$ is the initial value of $\bm{x}$ at $t_0$, $c_i, i= 1,2,4$ are constants defining bounds on the Lyapunov function as defined in \cite[Eq. 9.3 - 9.5]{khalil_nonlinear_2002} and $\alpha$ is a constant defined in dependence on the bounds of the Lyapunov function \cite[Eq. 9.22]{khalil_nonlinear_2002}.
		
		Finally, the estimation error of the state $x_2$ 
		\begin{equation}
			\label{eq:tilde_x2}
			\begin{split}
				\tilde{x}_2 = \bar{x}_2 +k w_1 ,
			\end{split}
		\end{equation}
		is bounded using the triangular inequality as
		\begin{equation*}
			\begin{split}
				|\tilde{x}_2| \leq |\bar{x}_2| +|k| |w_1| \leq w_\theta^\mrm{max} + k w_1^\mrm{max} \leq w_{x2}^\mrm{max}.
			\end{split}
		\end{equation*} 
		Hence, we have shown that the adaptive observer's error state defined as 
		\begin{equation}
			\label{eq:full_err_state}
			\bm{x}^\mrm{err} \coloneqq	\begin{bmatrix}{\tilde{x}}_2 & {\tilde{\theta}}_1 & {\tilde{\theta}}_2\end{bmatrix}^\top 
		\end{equation}
		is ultimately bounded.
	\end{proof}
	By combining Propositions 1 and 2, we obtain the main result of this paper.
	\begin{corollary}
		Consider the states $x_{i},  i = \{1,2,3\}$ and the parameters ${\theta}_i, i = \{1,2\}$ of the system \eqref{x} as well as the algebraic estimation formula given in \eqref{eq:alg_obs_noise} and the adaptive observer given in \eqref{eq:ad_obs}. Under noisy measurements as in \eqref{y}, considering the estimate of the mechanical torque given in \eqref{eq:Tm_est} and assuming sufficiently small noise bounds, if the regressor $\Delta$ is of persistence of excitation, then the estimation errors defined in \eqref{eq:err:alg_obs} and \eqref{eq:full_err_state} are ultimately bounded. 
	\end{corollary}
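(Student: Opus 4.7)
The plan is to combine Propositions~\ref{pro:bound_alg_obs} and~\ref{pro:bound_adap_obs} sequentially, exploiting the cascade structure of the overall estimator depicted in Figure~\ref{fig:Obs_Overview}: the adaptive observer analyzed in Proposition~\ref{pro:bound_adap_obs} is driven by the algebraic estimates $\hat{x}_1,\hat{x}_3$ produced by \eqref{eq:alg_obs_noise}, whose error properties are characterized in Proposition~\ref{pro:bound_alg_obs}. First I would invoke Proposition~\ref{pro:bound_alg_obs} to conclude that, under the noise model \eqref{y} with the given bounds $|w_i|\leq w_i^\mrm{max}$, the algebraic estimation errors defined in \eqref{eq:err:alg_obs} satisfy $|\tilde{x}_1|\leq w_{x1}^\mrm{max}$ and $|\tilde{x}_3|\leq w_{x3}^\mrm{max}$, and moreover that they admit the affine decomposition $\hat{x}_1 = x_1 + w_{x1}$, $\hat{x}_3 = x_3 + w_{x3}$ derived in that proof.

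Next I would feed these bounds into Proposition~\ref{pro:bound_adap_obs}. The key observation is that the affine error characterization of $\hat{T}_\mrm{e}$ used in the proof of Proposition~\ref{pro:bound_adap_obs}, and consequently the derivation of \eqref{eq:x2_err} and the LTV perturbation form \eqref{eq:err_ad_ob_form}, are constructed precisely on the basis of the bounds $w_{x1}^\mrm{max}$ and $w_{x3}^\mrm{max}$ supplied by Proposition~\ref{pro:bound_alg_obs}. Together with the additional hypotheses of the corollary -- boundedness of the mechanical torque error as in \eqref{eq:Tm_est}, persistence of excitation of $\Delta$, and sufficiently small noise bounds -- Proposition~\ref{pro:bound_adap_obs} then yields the ultimate boundedness of the full error state \eqref{eq:full_err_state}, i.e., $|\tilde{x}_2|\leq w_{x2}^\mrm{max}$ and $|\tilde{\theta}_i|\leq w_\theta^\mrm{max}$ for $i=\{1,2\}$.

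Collecting the two sets of bounds immediately establishes ultimate boundedness of every estimation error appearing in \eqref{eq:err:alg_obs} and \eqref{eq:full_err_state}. The only real delicacy -- and the step I would handle most carefully -- is verifying the consistency of the smallness requirement on the noise across the cascade: since $w_{x1}^\mrm{max}$ and $w_{x3}^\mrm{max}$ enter the construction of $w_\mrm{Te}^\mrm{max}$ and hence of the perturbation magnitude $\rho$ in \eqref{eq:bound_x}, one must argue that shrinking the measurement noise bounds $w_i^\mrm{max}$ in \eqref{y} simultaneously shrinks $w_{x1}^\mrm{max}, w_{x3}^\mrm{max}$, ensuring that the hypothesis of \cite[Lemma~9.4]{khalil_nonlinear_2002} invoked in the proof of Proposition~\ref{pro:bound_adap_obs} remains valid. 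This is a purely bookkeeping matter, requiring no new technical argument beyond what is already contained in the two propositions, so no further obstacle arises.
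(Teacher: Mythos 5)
Your proposal is correct and follows exactly the paper's route: the paper itself establishes the corollary simply by the observation ``By combining Propositions 1 and 2, we obtain the main result of this paper,'' relying on the same cascade structure you describe. Your additional remark about the consistency of the smallness requirement across the cascade is a sensible (and slightly more careful) elaboration of the same argument, not a different approach.
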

	
	\section{Simulation results}
	In this section, we illustrate the impact of PMU measurements with Gaussian and non-Gaussian noise on the method's performance. For this, we show simulations using the well-known New England IEEE-39 bus system with the parameters provided in \cite{hiskens_ieee_nodate}. The SGs are represented by the third-order flux-decay model and are equipped with AVRs and PSSs according to Figure \ref{fig:SG_model}. Consequently, each SG is modeled by a 9-dimensional model, details of which are given in \cite[Eq. (18)]{lorenz-meyer_pmu-based_2020-1}. 
	\begin{figure}
		\centering
		\includegraphics[width=1\linewidth]{./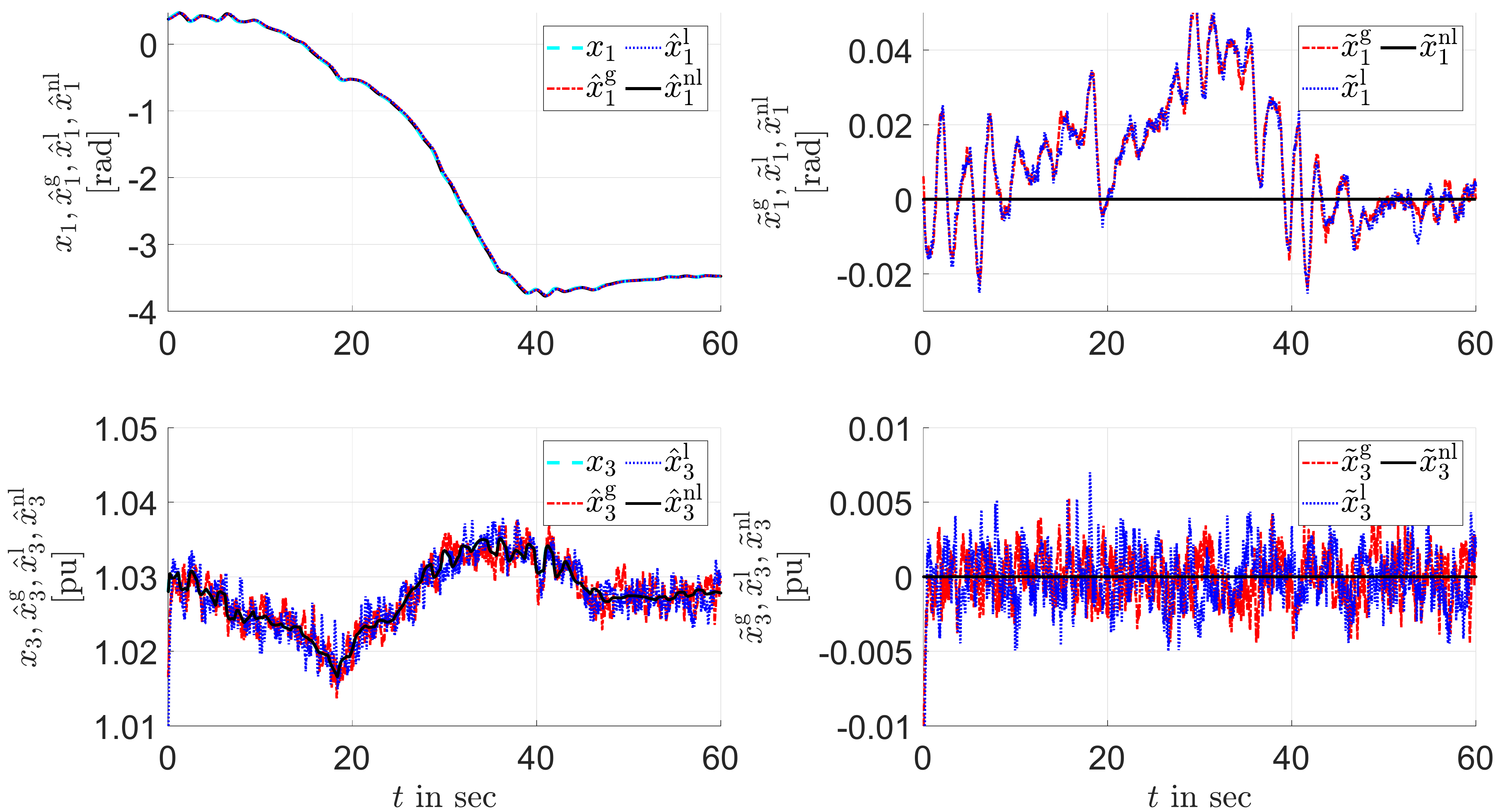}
		\caption{Algebraic state estimation for $x_{1}$ and $x_{3}$ of generator 10 in the presence of load variations. Owing to the algebraic structure of this estimation, no initial conditions are required.}
		\label{fig:adp_obs}
	\end{figure}
	
	We assume a PMU installed at the terminal bus of generator 10 and consider three cases to illustrate the performance under realistic conditions:
	\begin{enumerate}
		\item A noiseless case, denoted with superscript $\{\cdot\}^\mrm{nl}$.
		\item A case, where the PMU measurements are disturbed by zero mean Gaussian noise, denoted with superscript $\{\cdot\}^\mrm{g}$. 
		\item A case, where the PMU measurements are disturbed by zero mean Laplacian noise, denoted with superscript $\{\cdot\}^\mrm{l}$.  Usage of Laplacian distributions to mimic realistic PMU measurement noise is recommended in \cite{wang_assessing_2018}.
	\end{enumerate}
	For the cases with Gaussian as well as Laplacian measurement noise, the signal to noise ratio is set to 45 dB according to \cite{brown_characterizing_2016}.
	
	The simulation scenario consists of minor load variations in the system yielding frequency variations within $60\pm0.05$~[Hz] and is thus consistent with regular operation of transmission grids \cite{weissbach_verbesserung_2009}. In Figure \ref{fig:adp_obs}, the results of the algebraic state estimation results according to \eqref{eq:alg_obs_noise} are shown for all three cases. Owing to the algebraic structure, initial conditions are not required.  As can be seen, the method is able to accurately track the states $x_1$ and $x_3$ of generator 10 with a bounded error. 
	In Figure \ref{fig:DREM}, the results of the DREM-based parameter estimator and the adaptive observer according to \eqref{eq:ad_obs} are shown for all three cases. It can be seen that the parameter estimator converges over a longer period in the cases of noisy PMU measurements. Albeit, after $t\approx 42$ sec the parameters estimates for $a_1$ and $a_2$ converge with a bounded error. Consequently, the state estimation error also converges with a bounded error.  
	\begin{figure}
		\centering
		\includegraphics[width=1\linewidth]{./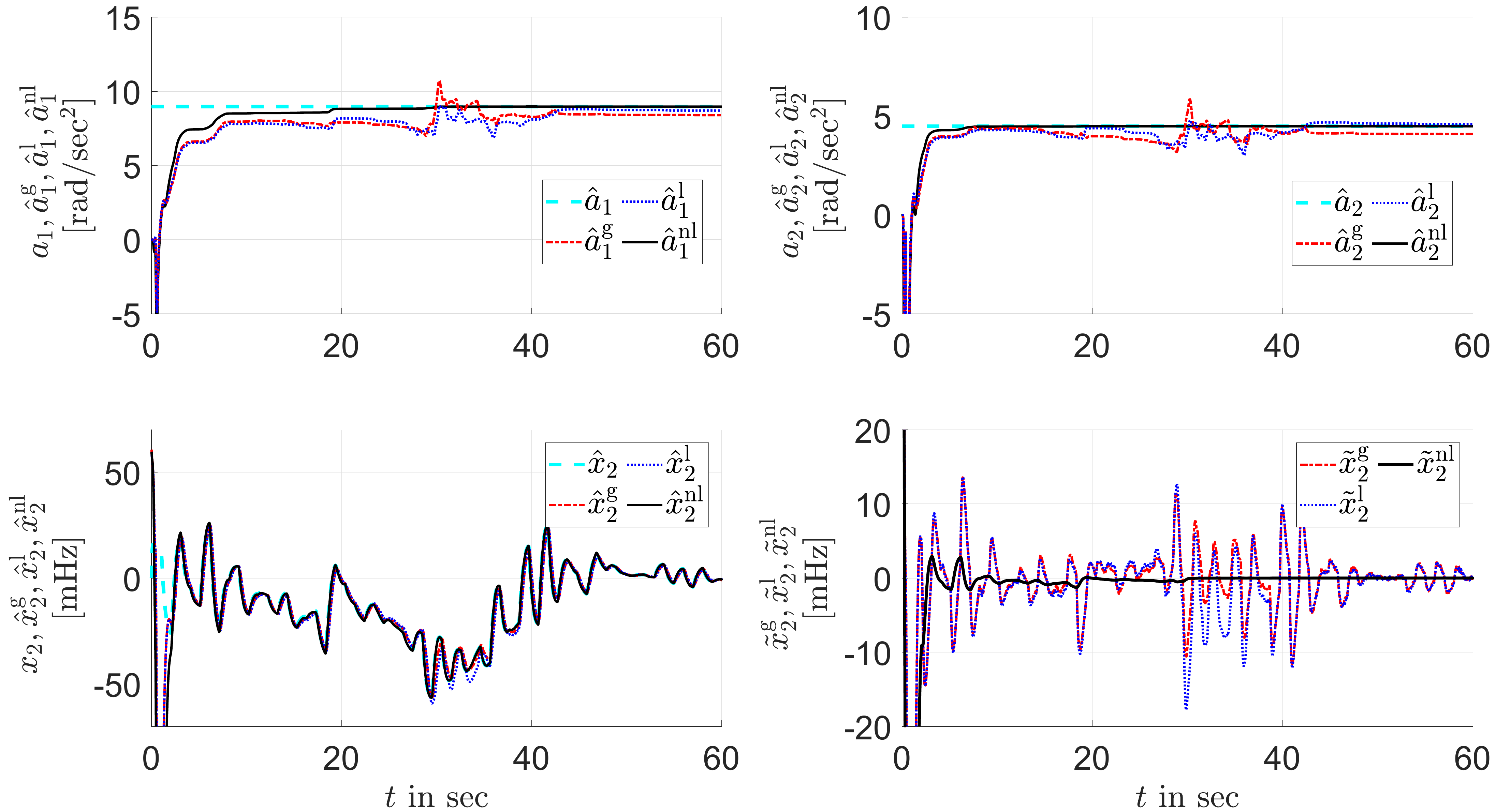}
		\caption{Simulation results of the DREM-based parameter estimation and the adaptive observer for $x_{2}$ of generator 10 in the presence of load variations. }
		\label{fig:DREM}
	\end{figure}
	
	\section{Conclusion}
	Building on the previous work \cite{lorenz-meyer_pmu-based_2020-1,lorenz-meyer_dynamic_2022}, in this note the robustness of the algebraic estimator \eqref{eq:alg_obs_noise} as well as the adaptive observer \eqref{eq:ad_obs} against measurement noise is investigated. For this, an analytical analysis of the impact of measurement noise on the estimates is presented. It is found that a bounded noise results in a bounded estimation error. For nonlinear observers this important robustness property cannot be guaranteed in general \cite{shim_nonlinear_2016}, which underpins the relevance of asserting it. 
	Finally, the impact of PMU measurements with Gaussian and non-Gaussian noise on the method's performance is shown in realistic simulations using the New England IEEE-39 bus system. Thereby, the bounded state and parameter estimation error of the method in the presence of noisy measurements is illustrated.
	\section*{Acknowledgements} The authors would like to thank Juan G. Rueda-Escobedo for many helpful comments on the topics of this paper.

		\bibliographystyle{IEEEtran}
		\bibliography{Diss_bib}

	\end{document}